\newtheorem{theorem}{Theorem}
\newtheorem{lemma}[theorem]{Lemma}
\newtheorem*{remark*}{Remarks}
\newcommand{\cA}{\mathcal{A}}
\newcommand{\cG}{\mathcal{G}}
\newcommand{\cV}{\mathcal{V}}
\newcommand{\cE}{\mathcal{E}}
\newcommand{\cM}{\mathcal{M}}
\newcommand{\field}{\mathbb{F}}
\newcommand{\code}{\mathcal{C}}
\newcommand{\rate}{\mathcal{R}}
\newcommand{\rank}{\text{rank}}
\newcommand{\cN}{\mathcal{N}}
\title{Optimal storage codes on graphs with fixed locality}
\author{{Sabyasachi Basu}\dag \hspace{1cm}\and{Manuj Mukherjee}\ddag}
\date{}
\begin{document}
\maketitle

\renewcommand{\thefootnote}{}
\footnotetext{
 \noindent\dag Sabyasachi Basu is with the University of California Santa Cruz (UCSC), USA. Email:sbasu3@ucsc.edu. 
 
 \ddag Manuj Mukherjee is with the Indraprastha Institute of Information Technology Delhi (IIIT Delhi), India. Email: manuj@iiitd.ac.in.}

\renewcommand{\thefootnote}{\arabic{footnote}}

\begin{abstract}
    Storage codes on graphs are an instance of \emph{codes with locality}, which are used in distributed storage schemes to provide local repairability. Specifically, the nodes of the graph correspond to storage servers, and the neighbourhood of each server constitute the set of servers it can query to repair its stored data in the event of a failure. A storage code on a graph with $n$-vertices is a set of $n$-length codewords over $\field_q$ where the $i$th codeword symbol is stored in server $i$, and it can be recovered by querying the neighbours of server $i$ according to the underlying graph. 

    In this work, we look at binary storage codes whose repair function is the parity check, and characterise the tradeoff between the locality of the code and its rate. Specifically, we show that the maximum rate of a code on $n$ vertices with locality $r$ is bounded between $1-1/n\lceil n/(r+1)\rceil$ and $1-1/n\lceil n/(r+1)\rceil$. The lower bound on the rate is derived by constructing an explicit family of graphs with locality $r$, while the upper bound is obtained via a lower bound on the binary-field rank of a class of symmetric binary matrices. Our upper bound on maximal rate of a storage code matches the upper bound on the larger class of codes with locality derived by Tamo and Barg. As a corollary to our result, we obtain the following asymptotic separation result: given a sequence $r(n), n\geq 1$, there exists a sequence of graphs on $n$-vertices with storage codes of rate $1-o(1)$ if and only if $r(n)=\omega(1)$. 
    \end{abstract}

\section{Introduction}\label{sec:intro}

In the era of `big data' and cloud storage large files are required to be fragmented into smaller components and stored across multiple servers. The simplest storage scheme consists of partitioning the file and storing each fragment in a different server. Unfortunately, this scheme is not robust to server-failure. How does one recover a part of the file if the server storing it fails? A naive solution is simply replicating the different fragments and storing the replicas across multiple servers, thereby ensuring that every part of the file is recoverable even if one of the servers fail. This replication-based strategy however wastes storage resources. Instead, an efficient approach to robust distributed storage of large files involves encoding it with a \emph{code with locality} \cite{Gopalan,BargTamo}.

Let $\field_q$ denote the Galois field of size $q$. A code with locality $r$ is an error control code $\code\subseteq\field_q^n$, such that for every $i\in[n]$, there exists at most $r$ locations $j_1,j_2,\ldots,j_r\in[n]$, such that for any codeword $c\triangleq(c_1,c_2,\ldots,c_n)\in\code$, its $i$th codeword symbol $c_i$ can be recovered from the codeword symbols $c_{j_k}, 1\leq k\leq r$. Therefore, for robust storage, a large file can be encoded using a code with locality, with each codeword symbol being stored in different servers. In the event of a server failure, the failed server can recover its content by querying at most $r$ servers. A code with locality $r$, however, only guarantees that to repair a failed server, at most $r$ servers need to be queried. In some situations, say due to the geographical location of servers, a fixed server might not be able to query any arbitrary set of $r$ servers. To tackle this issue, Mazumdar \cite{Arya1} and Shanmugam and Dimakis \cite{SD14} simultaneously introduced the concept of \emph{storage codes on graphs}.

A storage code on a graph is defined as follows. Consider a graph $\cG=(\cV,\cE)$ on $n$ vertices, i.e., $|\cV|\;=n$. For any vertex $v\in\cV$, denote the neighbours of $v$ by $N(v)\triangleq\{u\in\cV:(u,v)\in\cE\}$. The vertices of the graph represent the servers, and the neighbourhood of a vertex represents the set of servers it can query from. We restrict ourselves to graphs without isolated vertices,\footnote{We call a vertex isolated if no edge connects to it.} since every server must be able to query at least one server to repair its content in case of a failure. Each vertex $v\in\cV$ is assigned a symbol from $\field_q$, and is equipped with a repair function $f_v:\field_q^{|N(v)|}\to\field_q$, which is used to repair its stored symbol by querying its neighbours' symbols. A storage code $\code\subseteq\field_q^n$ on the graph $\cG$ consists of the set of $n$-length strings $c=(c_1,c_2,\ldots,c_n)$ from $\field_q$, such that for any $v\in\cV$, we have $f_v(c_j:j\in N(v))=c_v$. 

The locality of a storage code on a graph is therefore the \emph{maximum degree} of the graph. The \emph{rate} of a storage code $\code\subseteq\field_q$ on a graph $\cG$ is defined as $\log_q|\code|/n$. In particular, if the code is \emph{linear},\footnote{A code $\code\subseteq\field_q^n$ is linear if it is a subspace of $\field_q^n$.} the rate of the code is given by $\dim(\code)/n$. The \emph{capacity} of a graph $\cG$ is defined to be the maximal rate among all possible storage codes on $\cG$. Note that the definition of rate implies a trivial bound of one on the capacity.

In their pioneering work \cite{Arya1}, Mazumdar was interested in characterising the capacity of an arbitrary graph $\cG$. The capacity was shown to be upper bounded by $n$ minus the \emph{minrank} \cite{Haemers} of the graph, while lower bounds on the capacity were given through specific constructions. In a follow-up work, Mazumdar et al. \cite{MMV19} obtained tighter upper bounds on the capacity via \emph{gadget covering}. The authors further showed that their upper bound was tight for certain classes of graphs (see Section~V of \cite{MMV19}).

A different thread in the study of storage codes on graphs was initiated by Barg and Zemor in \cite{BZ22}. It is a well-known fact that complete graphs admit storage codes of rate 1, i.e., the maximal possible rate. This is achieved using codes over the binary field $\field_2$ with the repair function being a parity check, i.e., $f_v(c_j:j\in N(v))=\sum_{j\in N(v)}c_j$, for all $v\in\cV$, where the summation is modulo-2. The authors in \cite{BZ22} wanted to design a sequence of \emph{`sparse'} graphs allowing binary storage codes of rate asymptotically approaching one as the number of vertices increase to infinity. In \cite{BZ22}, the authors were able to come up with a sequence of \emph{triangle-free} graphs which allow binary storage codes of rates approaching $3/4$ asymptotically. In a subsequent work \cite{BSY22}, the problem was completely solved by constructing a sequence of triangle-free graphs allowing binary storage codes with rates approaching one asymptotically.

\subsection{Our contribution}\label{sec:contrib}

In this work, we continue along the lines of recent work~\cite{BZ22,BSY22} and ask the following question: Given a sequence of localities $r(n), n\geq 1$, when is there a sequence of graphs on $n$ vertices allowing binary storage codes of rate $1-o(1)$ with locality $r(n)$? Our main result is the following theorem, which exactly characterises what sequences of localities allow graphs with storage codes approaching rate one.

\begin{theorem}
    Given a sequence of localities $r(n), n\geq 1$, there exists a sequence of graphs $\cG_n, n\geq 1$, on $n$-vertices supporting binary storage codes of locality $r(n)$ and rate $1-o(1)$ if and only if $r(n)=\omega(1)$. 
    \label{th:main}
\end{theorem}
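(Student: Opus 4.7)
The plan is to derive Theorem~\ref{th:main} as a direct corollary of the quantitative capacity bounds announced in the abstract: the explicit construction and the matching upper bound together pin the maximum rate at locality $r$ to $1 - \frac{1}{n}\lceil n/(r+1)\rceil$, up to lower-order terms. The only analytic ingredient I need beyond those bounds is the elementary ceiling estimate
$$\frac{1}{r+1} \;\leq\; \frac{1}{n}\left\lceil \frac{n}{r+1} \right\rceil \;\leq\; \frac{1}{r+1} + \frac{1}{n},$$
which translates the non-asymptotic statement into a clean dependence on $r$ alone.

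For the \emph{if} direction, assume $r(n) = \omega(1)$ and take $\cG_n$ to be the explicit graph furnishing the lower bound at locality $r(n)$; by the above estimate its supported rate is at least $1 - \frac{1}{r(n)+1} - \frac{1}{n}$, and both correction terms tend to zero, so the rate is $1-o(1)$. For the \emph{only if} direction, I would argue by contrapositive: if $r(n)$ is not $\omega(1)$, then along some subsequence $n_k$ one has $r(n_k) \leq M$ for a fixed constant $M$, and the upper bound applied to any graph on $n_k$ vertices of locality $r(n_k)$ gives rate at most
$$1 - \frac{1}{n_k}\left\lceil \frac{n_k}{M+1} \right\rceil \;\leq\; 1 - \frac{1}{M+1},$$
which is bounded strictly away from $1$, contradicting the assumption of rate $1 - o(1)$ along the subsequence.

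The real obstacle is not this corollary but the underlying two-sided quantitative bound: producing explicit locality-$r$ graphs that attain the lower bound, and proving the lower bound on the binary-field rank of the class of symmetric matrices that underpins the upper bound. Once those two ingredients are in hand, Theorem~\ref{th:main} is a one-line consequence of the ceiling estimate above, and in particular no further graph-theoretic or coding-theoretic work is needed to obtain the asymptotic separation at $r(n) = \omega(1)$.
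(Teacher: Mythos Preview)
Your proposal is correct and follows exactly the paper's approach: Theorem~\ref{th:main} is stated as an immediate corollary of the quantitative two-sided bound on $C_{n,r}$ (Theorem~\ref{th:cap}), with all the substantive work residing in the construction (Lemma~\ref{lem:const}) and the rank lower bound (Lemma~\ref{lem:rank}) that you correctly identify as the real obstacles. One small slip, likely inherited from a typo in the abstract: the paper's \emph{upper} bound on $C_{n,r}$ carries a floor, not a ceiling, i.e.\ $C_{n,r} \leq 1 - \frac{1}{n}\bigl\lfloor \frac{n}{r+1}\bigr\rfloor$, so in your contrapositive step the correct chain is $1 - \frac{1}{n_k}\bigl\lfloor \frac{n_k}{M+1}\bigr\rfloor \leq 1 - \frac{1}{M+1} + \frac{1}{n_k}$, which still forces $\limsup_k \rate(\code_{\cG_{n_k}}) \leq 1 - \frac{1}{M+1} < 1$ and yields the same contradiction.
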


To prove Theorem~\ref{th:main}, we first construct a sequence of graphs with locality $r(n)$ and rate $1-\frac{1}{n}\lceil\frac{n}{r(n)+1}\rceil$. Our constructed graph consists of disjoint cliques of size at most $r(n)+1$. Since the constructed code has rate $1-\frac{1}{n}\lceil\frac{n}{r(n)+1}\rceil$, $r(n)=\omega(1)$ immediately implies a rate of $1-o(1)$. We also modify our original construction to come up with a sequence of connected graphs with locality $r(n)$ supporting binary storage codes of rate $1-\frac{3}{n}\lceil\frac{n}{r(n)+1}\rceil$. Hence, even these graphs have storage codes with rates $1-o(1)$ whenever $r(n)=\omega(1)$. 

To prove the converse, i.e., no binary storage code with locality $r(n)=O(1)$ and rate $1-o(1)$ exist, we first follow \cite{BSY22} to relate the rate of the binary storage code on the graph $\cG$ to the $\field_2$-rank of its \emph{augmented adjacency matrix}. Note that if $\cG$ has locality $r(n)$, then every row of its augmented adjacency matrix can have at most $r(n)$ ones. We then prove a lower bound on the $\field_2$-ranks of binary matrices with rows having at most $r(n)$ ones, which gives the converse. In fact, the lower bound actually shows that for any given $n\geq 1$ and any locality $r(n)$, our construction of disjoint cliques is optimal in terms of rate.  

We remark that the converse of Theorem~\ref{th:main} also follows from the more general result of Tamo and Barg \cite[Theorem~2.1]{BargTamo} on codes with locality. In particular, since Tamo and Barg's result holds for any code with locality, their result shows that the converse part Theorem~\ref{th:main} remain valid for codes on any field $\field_q$ with any repair function $f_v$. However, while Tamo and Barg's proof relies on a modification of the Turan's theorem for graphs, our proof is built upon a lower bound on the $\field_2$-rank of symmetric binary matrices with rows of bounded Hamming weight, which may be of independent interest.

\subsection{Related work}\label{sec:rel}

\subsubsection{Index codes}\label{sec:indcode}

 A closely related setup to the problem of storage codes on graphs is the problem of \emph{index coding with side information} \cite{index}. The index coding problem considers a (possibly directed) graph $\cG=(\cV,\cE)$, with $|\cV|=n$, where each vertex $v$ wishes to compute a symbol $c_v\in\field_q$. The vertices $v\in\cV$ have access to $(c_{v'}:v'\in N(v))$. The goal of index coding is to design a function $f:\field_q^n\to\field_q^k$ with the least possible $k$, as well as local decoding functions $g_v:\field_q^{|N(v)|}\times\field_q^k\to\field_q$ at each vertex $v\in\cV$, satisfying $c_v=g_v\biggl(f(c_1,\ldots,c_n),(c_{v'}:v'\in N(v))\biggr)$. It is immediate to see that the \emph{nullspace} of a linear index code corresponds to a storage code on the graph with $g_v$ being the respective repair functions. Furthermore, cosets of any linear storage code on a graph gives rise to a linear index coding scheme. These observation were summarised in \cite[Proposition~1]{Arya1}, and this connection was further exploited to obtain bounds on the storage capacity of graphs in \cite{Arya1,MMV19}. A detailed exposition on the relationship between index coding and storage codes on graphs is available in \cite{Arya3}.

 \subsubsection{Hat-guessing games on graphs}\label{sec:hatguess}

 The problem of \emph{hat guessing games} on graphs was introduced in \cite{Riis}, and later expanded upon in subsequent works like \cite{undirectedHGN,CDRguessing}. In a hat guessing game with $n$ players, each player sits on the nodes of a graph $\cG=(\cV,\cE)$ with $|\cV|=n$. The players are provided with hats with colours from $\field_q$, which they are unable to see. However, the player in node $v\in\cV$ can see the colours of the players in $N(v)$. Before the assignment of the hats, the players fix a guessing strategy, i.e., the players agree upon functions $f_v:\field_q^{|N(v)|}\to\field_q$, for all $v\in \cV$, using which they are going to guess the colour of their hats based on the colours of their neighbours' hats. The hat guessing game then tries to come up with the guessing strategy that maximizes the probability of all parties guessing correctly when the colours of the hats are assigned uniformly at random. 

 It is not difficult to see that if the guessing functions $f_v$ correspond exactly to those of a storage code $\code\subseteq\field_q^n$ on $\cG$, then exactly the hat assignments from $\code$ can be correctly guessed at all parties. Thus, the probability of correct guessing in case of a uniformly at random assignment of hat colours equates to the rate of the storage code corresponding to the guessing strategy. Therefore, the guessing strategy that maximizes the probability of correct guessing corresponds to the maximal rate storage code on $\cG$. These connections between storage codes, hat guessing games on graphs, and index coding with side information are available in detail in \cite{FK}.

 \subsubsection{Codes with `local' properties}\label{sec:local}

 As remarked earlier, the storage codes on graphs are a special case of codes with locality \cite{Gopalan}, which attempt to minimize the number of servers accessed in order to repair the information stored at a server. A related notion is that of \emph{regenerating codes}, which attempt at minimizing the amount of information exchanged in order to repair the information at a server \cite{Dim,PVKetal,NRK}. Other `local' properties are also demanded from codes used for distributed storage, such as \emph{local decodability} \cite{TK,Yekhanin}, and \emph{compression with local decode and update} \cite{Shashank1,Shashank2}. 

 \subsection{Notations and organisation}\label{sec:org}

 For any $n\in\mathbb{N}$, we shall use the notation $[n]\triangleq\{1,2,\ldots,n\}$. We use $\rank(A)$ to denote the $\field_2$-rank of any square binary matrix $A$. Moreover, we shall use the word `rank' to mean $\field_2$-rank throughout this paper.
 
 Basic definitions and the main result that proves Theorem~\ref{th:main} is introduced in Section~\ref{sec:prelims}. The achievability proof of Theorem~\ref{th:main} via construction of rate-optimal storage codes for fixed locality is detailed in Section~\ref{sec:achievable}. The converse to Theorem~\ref{th:main} is proved in Section~\ref{sec:conv}. The paper concludes with Section~\ref{sec:conc}.

\section{Preliminaries}\label{sec:prelims}

Recall that a binary storage code $\code\subseteq \{0,1\}^n$ with parity repair functions on the graph $\cG=(\cV,\cE)$, with $|\cV|=n$, is the set of vectors of the form $c=(c_v)_{v\in\cV}$ satisfying $c_v=\sum_{u\in N(v)}c_u$\footnote{The sum is over the binary field.} for every $v\in \cV$. The \emph{locality} $r$ of the graph $\cG$ is defined as the maximum number of bits that need to be queried in case a node fails. The locality is therefore given by the maximum-degree $\max_{v\in \cV}|N(v)|$. The efficiency of the storage code is measured in terms of the \emph{rate} of the code defined as $\rate(\code)=\log|\code|/n$.

Note that a graph $\cG$ on vertices has locality $r(n)$ if and only if all the rows of its \emph{augmented adjacency matrix}\footnote{The augmented adjacency matrix of a graph is simply its adjacency matrix, but with its diagonal elements being 1.} $\bar{A}(\cG)$ have weights $w(i)\triangleq |\{j:\bar{A}(i,j)=1\}|$ satisfying $2\leq w(i)\leq r+1$.\footnote{We need $w(i)\geq 2$ since every node, in the event of a failure, must be able to query at least one more node to repair its content.} The following elementary lemma relates the rate of the code $\code$ on $\cG$ with the rank of the augmented adjacency matrix $\bar{A}(\cG)$.

\begin{lemma}
\label{lem:basic}
A binary storage code $\code_\cG$ on a graph $\cG=(\cV,\cE)$ over $n$ vertices is a linear code with rate 
$$
\rate(\code_\cG)=1-\frac{\rank(\bar{A}(\cG))}{n}.
$$
\end{lemma}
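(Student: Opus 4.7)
The plan is to identify the storage code $\code_\cG$ with the $\field_2$-nullspace of the augmented adjacency matrix $\bar{A}(\cG)$, and then invoke rank--nullity to read off the rate.

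First I would rewrite the repair constraint $c_v=\sum_{u\in N(v)}c_u$ (arithmetic over $\field_2$) as $c_v+\sum_{u\in N(v)}c_u=0$. Noting that the $v$-th row of $\bar{A}(\cG)$ has a $1$ in position $v$ (from the diagonal augmentation) and a $1$ in each position $u\in N(v)$, this equation is exactly the inner product of the $v$-th row of $\bar{A}(\cG)$ with the vector $c$ set equal to $0$. Collecting the constraints across all $v\in\cV$ shows that $c\in\code_\cG$ if and only if $\bar{A}(\cG)\,c=0$. Hence $\code_\cG$ is the nullspace of $\bar{A}(\cG)$, which is automatically a linear subspace of $\field_2^n$, proving linearity.

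Next, applying the rank--nullity theorem over $\field_2$ to the $n\times n$ matrix $\bar{A}(\cG)$ gives
$$\dim(\code_\cG)=\dim\bigl(\ker\bar{A}(\cG)\bigr)=n-\rank\bigl(\bar{A}(\cG)\bigr).$$
Since $\code_\cG$ is linear over $\field_2$, we have $|\code_\cG|=2^{\dim(\code_\cG)}$, so
$$\rate(\code_\cG)=\frac{\log_2|\code_\cG|}{n}=\frac{n-\rank(\bar{A}(\cG))}{n}=1-\frac{\rank(\bar{A}(\cG))}{n},$$
as claimed.

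There is no real obstacle here; the only thing worth being careful about is the bookkeeping that the diagonal augmentation of the adjacency matrix is precisely what encodes the left-hand side $c_v$ of the repair equation, so that the entire system of constraints becomes the single matrix equation $\bar{A}(\cG)c=0$ rather than an affine condition. Once that identification is in place, the rest is immediate from standard linear algebra.
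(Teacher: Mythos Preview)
Your proof is correct and follows exactly the paper's approach: identify $\code_\cG$ with the $\field_2$-nullspace of $\bar{A}(\cG)$ (equivalently, recognize $\bar{A}(\cG)$ as the parity-check matrix), then apply rank--nullity. The paper's own proof is a two-line sketch of precisely this argument, so you have simply filled in the details.
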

\begin{proof}
Since the repair functions are parity checks, the code is linear and its parity-check matrix corresponds to $\bar{A}(\cG)$. The result then follows via the rank-nullity theorem. See Section~1 of \cite{BSY22} for details.
\end{proof}

Lemma~\ref{lem:basic} tells us that among all graphs on $n$ vertices, the graph with the best possible rate is the complete graph $K_n$, yielding a rate of $1-(\rank(\bar{A}(K_n)))/n=1-1/n$. In fact, it is easy to see that the storage code corresponding to $K_n$ is the so-called \emph{single parity check} code. The issue with $K_n$ however is that each vertex has degree $n-1$, meaning that recovering a bit involves querying all the remaining bits. 

In this paper, we attempt to identify the tradeoff between locality and rate. More precisely, define the capacity of binary storage codes on graphs of length $n$ and locality $r$ to be
\begin{equation}
    C_{n,r}\triangleq \sup \rate(\code_\cG), \label{eq:capacity}
\end{equation}
where the supremum is over all possible graphs on $n$ vertices and locality at most $r$. In this paper, we shall prove the following theorem that characterises $C_{n,r}$.

\begin{theorem}\label{th:cap}
    For any $n\geq 2$ and any $2\leq r\leq n-1$, the capacity of binary storage codes on graphs of length $n$ and locality $r$ is given by 
    $$
    1-\frac{1}{n}\biggl\lceil\frac{n}{r+1}\biggr\rceil \leq C_{n,r}\leq 1-\frac{1}{n}\biggl\lfloor\frac{n}{r+1}\biggr\rfloor.
    $$
\end{theorem}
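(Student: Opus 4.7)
By Lemma~\ref{lem:basic}, the theorem reduces to controlling the $\field_2$-rank of the augmented adjacency matrix $\bar{A}(\cG)$: the achievability (lower bound on $C_{n,r}$) requires a graph of locality $r$ with small rank, while the converse (upper bound on $C_{n,r}$) requires a rank lower bound for \emph{every} such graph.

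For achievability, I would take $\cG$ to be the disjoint union of $k := \lceil n/(r+1) \rceil$ cliques of sizes $s_1, \ldots, s_k$, each lying in $\{2, \ldots, r+1\}$ and summing to $n$. Such a partition exists for all $n \geq 2$ and $r \geq 2$; the only mildly delicate case, $n \equiv 1 \pmod{r+1}$, is handled by replacing one size-$(r+1)$ clique with two smaller cliques of sizes $\geq 2$. Each clique contributes an all-ones block to $\bar{A}(\cG)$ of $\field_2$-rank exactly $1$, so $\rank(\bar{A}(\cG)) = k$ and Lemma~\ref{lem:basic} gives rate $1 - \lceil n/(r+1)\rceil/n$, matching the claimed lower bound.

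For the converse, I would fix an arbitrary $\cG$ on $n$ vertices of locality at most $r$ and extract an independent set $S \subseteq \cV$ by the obvious greedy procedure: iteratively pick any uncovered vertex $v$, add it to $S$, and mark $\{v\} \cup N(v)$ as covered. Each step covers at most $r+1$ vertices, so $|S| \geq \lceil n/(r+1) \rceil$. Because $S$ is independent, the principal submatrix of $\bar{A}(\cG)$ indexed by $S$ has $1$'s on the diagonal (by augmentation) and $0$'s off-diagonal (by independence), i.e., it is the $|S| \times |S|$ identity. Hence $\rank(\bar{A}(\cG)) \geq |S| \geq \lceil n/(r+1)\rceil \geq \lfloor n/(r+1)\rfloor$, and Lemma~\ref{lem:basic} yields the stated upper bound on $C_{n,r}$.

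The substantive work is really the small-case bookkeeping in the achievability construction (avoiding singleton cliques when the naive partition would create one); the converse is a one-line consequence of a Turan-type greedy bound on the independence number paired with the identity-submatrix observation. As a byproduct, this approach yields the sharper bound $C_{n,r} \leq 1 - \lceil n/(r+1)\rceil/n$, which matches the lower bound and confirms the remark in the introduction that the disjoint-cliques construction is rate-optimal.
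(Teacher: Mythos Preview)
Your achievability argument is essentially identical to the paper's Lemma~\ref{lem:const}: both take a disjoint union of $\lceil n/(r+1)\rceil$ cliques, with the same bookkeeping in the residue-$1$ case to avoid a singleton block, and both read off the rank as the number of blocks.

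Your converse, however, takes a genuinely different and cleaner route than the paper's Lemma~\ref{lem:rank}. The paper runs a greedy procedure over row indices, at each step adding a \emph{column} of $\bar{A}(\cG)$ that has a $1$ in a position where all previously selected columns have $0$; linear independence then comes from an implicit triangular structure on the restricted matrix. You instead greedily build an \emph{independent set} $S$ in $\cG$ and observe that the principal submatrix of $\bar{A}(\cG)$ on $S$ is literally the identity, which makes the rank bound immediate. Both approaches share the same covering count (each greedy step accounts for at most $r+1$ vertices), but your identity-submatrix observation is more transparent and, as you note, directly yields the sharper inequality $\rank(\bar{A}(\cG))\ge \lceil n/(r+1)\rceil$, closing the $1/n$ gap left open in the statement of Theorem~\ref{th:cap}. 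The paper's column-selection argument in fact also supports the ceiling (every index $i$ is covered because the diagonal entry $c_i(i)=1$ guarantees a column is always available), but the paper only records the floor.
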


Observe that the lower and upper bounds on $C_{n,r}$ almost match -- they are off by at most $1/n$, a factor that disappears asymptotically. 

Theorem~\ref{th:cap} is proved in the two subsequent sections. We conclude this section by noting that Theorem~\ref{th:cap} implies Theorem~\ref{th:main}. 

\section{Proof of the achievable part of Theorem~\ref{th:cap}}\label{sec:achievable}

In this section, we prove that $C_{n,r}\geq 1-\frac{1}{n}\lceil\frac{n}{r+1}\rceil$. To do so, in the following lemma, we explicitly construct a graph $\cG$ on $n$ vertices with locality $r$, satisfying $\rate(\code_\cG)=1-\frac{1}{n}\lceil\frac{n}{r+1}\rceil$.

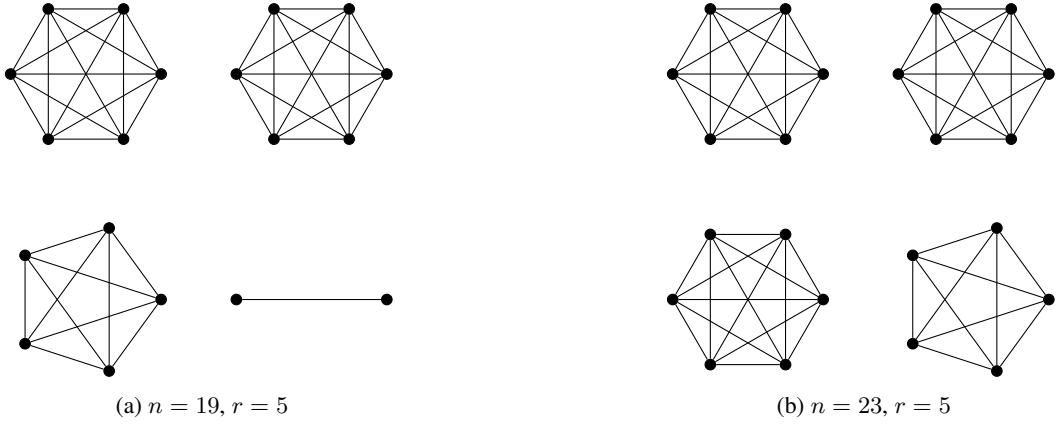
\begin{figure}
    \centering
    \begin{subfigure}[b]{0.49\textwidth}
    \centering
        \begin{tikzpicture}[scale=1]
    
    \tikzset{vertex/.style={draw, circle, fill=black, inner sep=0pt, minimum size=4pt}}
    
    \begin{scope}[shift={(0,0)}]
        \foreach \i in {1,...,6} {
            \node[vertex] (A\i) at ({\i*60+60}:1) {};
        }
        \foreach \i in {1,...,6} {
        \foreach \j in {\i,...,6} {
            \ifnum\j>\i
                \draw (A\i) -- (A\j);
            \fi
            }
        }
    \end{scope}
    
    \begin{scope}[shift={(3,0)}]
        \foreach \i in {1,...,6} {
            \node[vertex] (B\i) at ({\i*60+60}:1) {};
        }
        \foreach \i in {1,...,6} {
        \foreach \j in {\i,...,6} {
            \ifnum\j>\i
                \draw (B\i) -- (B\j);
            \fi
            }
        }
    \end{scope}
    
    \begin{scope}[shift={(0,-3)}]
        \foreach \i in {1,...,5} {
            \node[vertex] (D\i) at ({\i*72+72}:1) {};
        }
        \foreach \i in {1,...,5} {
        \foreach \j in {\i,...,5} {
            \ifnum\j>\i
                \draw (D\i) -- (D\j);
            \fi
            }
        }
    \end{scope}
    
    \begin{scope}[shift={(3,-3)}]
        \foreach \i in {1,...,2} {
            \node[vertex] (D\i) at ({\i*180+180}:1) {};
        }
        \foreach \i in {1,...,2} {
        \foreach \j in {\i,...,2} {
            \ifnum\j>\i
                \draw (D\i) -- (D\j);
            \fi
            }
        }
    \end{scope}
    
    \end{tikzpicture}
    \label{fig:n19r5}
    \caption{$n=19$, $r=5$}
    \end{subfigure}
    \begin{subfigure}[b]{0.49\textwidth}
    \centering
        \begin{tikzpicture}[scale=1]
    
    \tikzset{vertex/.style={draw, circle, fill=black, inner sep=0pt, minimum size=4pt}}
    
    \begin{scope}[shift={(0,0)}]
        \foreach \i in {1,...,6} {
            \node[vertex] (A\i) at ({\i*60+60}:1) {};
        }
        \foreach \i in {1,...,6} {
        \foreach \j in {\i,...,6} {
            \ifnum\j>\i
                \draw (A\i) -- (A\j);
            \fi
            }
        }
    \end{scope}
    
    \begin{scope}[shift={(3,0)}]
        \foreach \i in {1,...,6} {
            \node[vertex] (B\i) at ({\i*60+60}:1) {};
        }
        \foreach \i in {1,...,6} {
        \foreach \j in {\i,...,6} {
            \ifnum\j>\i
                \draw (B\i) -- (B\j);
            \fi
            }
        }
    \end{scope}
    
    \begin{scope}[shift={(0,-3)}]
        \foreach \i in {1,...,6} {
            \node[vertex] (C\i) at ({\i*60+60}:1) {};
        }
        \foreach \i in {1,...,6} {
        \foreach \j in {\i,...,6} {
            \ifnum\j>\i
                \draw (C\i) -- (C\j);
            \fi
            }
        }
    \end{scope}
    
    \begin{scope}[shift={(3,-3)}]
        \foreach \i in {1,...,5} {
            \node[vertex] (D\i) at ({\i*72+72}:1) {};
        }
        \foreach \i in {1,...,5} {
        \foreach \j in {\i,...,5} {
            \ifnum\j>\i
                \draw (D\i) -- (D\j);
            \fi
            }
        }
    \end{scope}
    
    \end{tikzpicture}
    \label{fig:n23r5}
    \caption{$n=23$, $r=5$}
    \end{subfigure}
    \caption{Examples of graphs from Lemma~\ref{lem:const} for both the cases, i.e., when $n\mod r+1 = 1$ $(n=19, r=6)$, and when $n \mod r+1\neq 1$ $(n=23, r=5)$.}
    \label{fig:example-family}
\end{figure}

\begin{lemma}
\label{lem:const} Given any $n\geq 2$ and any $2\leq r\leq n-1$, there exists a graph $\cG$ on $n$ vertices with locality $r$ with
$$
\rate(\code_\cG) = 1-\frac{1}{n}\biggl\lceil\frac{n}{r+1}\biggr\rceil.
$$
\end{lemma}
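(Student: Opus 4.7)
The plan is to take $\cG$ to be a disjoint union of exactly $k \triangleq \lceil n/(r+1) \rceil$ vertex-disjoint cliques, where each clique has size between $2$ and $r+1$. I would first check that such a partition of $n$ is always realizable under the hypothesis $2 \leq r \leq n-1$: when $n \equiv 0 \pmod{r+1}$ use $k$ copies of $K_{r+1}$; when $n \equiv s \pmod{r+1}$ with $2 \leq s \leq r$, use $k-1$ copies of $K_{r+1}$ together with one $K_s$; and in the boundary case $n \equiv 1 \pmod{r+1}$ (which forces $n \geq r+2$ and hence $k \geq 2$), use $k-2$ copies of $K_{r+1}$ plus the pair $K_2 \cup K_r$, whose vertex counts add up to $r+2$. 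The two panels in the figure exhibit these two families of cases.

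By construction every vertex has degree between $1$ and $r$, so the graph has locality $r$ and contains no isolated vertices. The rate computation then reduces via Lemma~\ref{lem:basic} to finding the $\field_2$-rank of $\bar{A}(\cG)$. After reordering vertices, $\bar{A}(\cG)$ is block-diagonal with blocks equal to the all-ones matrices $J_{m_1},\ldots,J_{m_k}$ corresponding to the clique sizes. Each $J_{m_i}$ has $\field_2$-rank exactly $1$, so $\rank(\bar{A}(\cG))=k$, and Lemma~\ref{lem:basic} gives $\rate(\code_\cG)=1-k/n=1-\tfrac{1}{n}\lceil n/(r+1)\rceil$.

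The only real obstacle is combinatorial rather than algebraic: the case $n \equiv 1 \pmod{r+1}$, where the naive ``fill with $K_{r+1}$'' strategy leaves one lonely vertex, violating the no-isolated-vertex requirement demanded by the locality condition $w(i)\geq 2$. The $K_2\cup K_r$ substitution fixes this without changing the total component count, and the hypothesis $r\geq 2$ ensures $K_r$ is itself a legitimate clique of size at least $2$. Everything else is a short verification.
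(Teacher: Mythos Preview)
Your proposal is correct and follows essentially the same approach as the paper: both construct $\cG$ as a disjoint union of $\lceil n/(r+1)\rceil$ cliques of sizes between $2$ and $r+1$, handle the residue $n\equiv 1\pmod{r+1}$ by replacing one $K_{r+1}$ with $K_r\cup K_2$, and compute the rank of $\bar A(\cG)$ as the number of blocks. Your presentation of the rank step via the block-diagonal all-ones structure is slightly crisper than the paper's row-basis argument, but the content is identical.
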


\begin{proof}
    For brevity, let us define $p\triangleq\lceil\frac{n}{r+1}\rceil$. We shall construct a graph $\cG$ on $n$ vertices, whose augmented adjacency matrix $\bar{A}(\cG)$ has rows with weights between 2 and $r+1$, and satisfies $\rank(\bar{A}(\cG))=p$. Then, by Lemma~\ref{lem:basic}, the storage code $\code_\cG$ will satisfy $\rate(\code_\cG)=1-\frac{1}{n}\lceil\frac{n}{r+1}\rceil$ as required. 

    We construct the graph $\cG$ as follows. Identify $\cV$ as $[n]$, and define a partition of $\cV$ as $\{\cV_1,\cV_2,\ldots, \cV_p\}$ as follows. If $n\mod (r+1) \neq 1$, then define $\cV_i=\{(i-1)(r+1)+1,\ldots, i(r+1)\}, 1\leq i\leq p-1$, and $\cV_p=\{(p-1)(r+1)+1,\ldots,n\}$. On the other hand, if $n\mod (r+1)=1$, then set $\cV_i=\{(i-1)(r+1)+1,\ldots, i(r+1)\}, 1\leq i\leq p-2$, $\cV_{p-1}=\{(p-2)(r+1)+1,\ldots,(p-1)(r+1)-1\}$, and $\cV_p=\{(p-1)(k+1),\ldots,n\}$. The graph $\cG$ is then constructed by assigning edges to pairs of vertices $v_1,v_2\in[n]$ if and only $v_1,v_2\in\cV_i$ for some $1\leq i\leq p$. Examples of the constructed graphs are shown in Figure~\ref{fig:example-family}

    First, observe that the $(u,v)$-th entry of the augmented adjacency matrix $\bar{A}(\cG)_{u,v}=1$ if and only if $u=v$ or $u,v\in\cV_i$ for some $i$. Thus, if $v\in\cV_i$, the weight of the corresponding row in $\bar{A}(\cG)$ is $|\cV_i|$. By construction, $2\leq |\cV_i|\leq r+1$, for all $1\leq i\leq p$, and hence $\cG$ has locality $r$. Next, select fixed candidates $v_i$ from each part $\cV_i,1\leq i\leq p$, and observe that the set of rows corresponding to the $v_i$s are linearly independent. Furthermore, any $v\notin\{v_1,v_2,\ldots,v_p\}$ must belong to some $\cV_j$ for some $1\leq j\leq p$. Then, the rows corresponding $v$ and $v_j$ are exactly the same. Hence, the rows corresponding to the set $\{v_1,v_2,\ldots,v_p\}$ form a basis of the rowspace of $\bar{A}(\cG)$. Thus, $\rank(\bar{A}(\cG))=p$ as required.
\end{proof}

To conclude this section, we note that while Lemma~\ref{lem:const} constructed disconnected graphs achieving the capacity lower bound, we can tweak the construction slightly to obtain connected graphs with locality $r$ achieving a rate of at least $1-\frac{3}{n}\lceil\frac{n}{r+1}\rceil$. Note that these graphs are enough to prove the achievability part of Theorem~\ref{th:main}, but we do not show that they are optimal for a fixed $n$ and $r$. The exact construction is given in the lemma that follows.

\begin{lemma}
    \label{lem:connected}
    For any $n\geq 2$ and any $2\leq r\leq n-1$, there exists a connected graph $\cG=(\cV,\cE)$ on $n$ vertices with locality $r$, whose associated binary storage code satisfies $\rate(\code_\cG)\geq 1-\frac{3}{n}\biggl\lceil\frac{n}{r+1}\biggr\rceil$.
\end{lemma}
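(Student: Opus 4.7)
The plan is to modify the disconnected construction of Lemma~\ref{lem:const} by reserving one \emph{connector} vertex in each block and joining the connectors along a path, paying a small controlled price in rank. I will assume $r \ge 3$ first and treat $r = 2$ at the end. Reuse the same partition $\cV = \cV_1 \sqcup \cdots \sqcup \cV_p$ with $p = \lceil n/(r+1)\rceil$ and each $|\cV_i| \ge 2$, and in every block pick a connector $c_i \in \cV_i$ and a distinct anchor $d_i \in \cV_i \setminus \{c_i\}$. The graph $\cG$ has three kinds of edges: all pairs inside $\cV_i \setminus \{c_i\}$ (so the non-connectors form a clique of size at most $r$), the in-block edge $c_i d_i$, and the chain edges $c_i c_{i+1}$ for $1 \le i \le p-1$.

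Locality and connectivity follow by inspection. A non-connector $v \in \cV_i \setminus \{c_i\}$ has degree at most $|\cV_i| - 2$ from the clique, plus $1$ if $v = d_i$, hence at most $|\cV_i| - 1 \le r$. An interior connector $c_i$ has degree $3$ (one anchor plus two chain edges), while $c_1$ and $c_p$ have degree $2$. So for $r \ge 3$ the maximum degree is $r$, and the path $c_1 c_2 \cdots c_p$ combined with each block's internal clique-plus-anchor edge gives connectivity.

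To control the rate I would bound $\rank(\bar{A}(\cG))$ via a small spanning set of its row space. Writing $a_i := \mathbf{1}_{\cV_i \setminus \{c_i\}}$, the rows fall into three types: every vertex of $\cV_i \setminus \{c_i, d_i\}$ contributes the row $a_i$; the anchor $d_i$ contributes $a_i + \mathbf{1}_{c_i}$; and the connector $c_i$ contributes $\mathbf{1}_{c_i} + \mathbf{1}_{d_i} + \mathbf{1}_{c_{i-1}} + \mathbf{1}_{c_{i+1}}$ (with the chain terms dropped at $i = 1$ and $i = p$). Every row therefore lies in $\mathrm{span}\{a_i, \mathbf{1}_{c_i}, \mathbf{1}_{d_i} : 1 \le i \le p\}$, a subspace of dimension at most $3p$. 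Lemma~\ref{lem:basic} then yields $\rate(\code_\cG) \ge 1 - 3p/n = 1 - \frac{3}{n}\lceil n/(r+1)\rceil$, as required.

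The only real obstacle is bookkeeping in the corner cases. Boundary connectors $c_1, c_p$ still sit inside the same $3p$-dimensional span since one chain term simply drops, and blocks of size exactly $2$ collapse $a_i$ onto $\mathbf{1}_{d_i}$, but because the rank is bounded through a spanning set rather than by exhibiting $3p$ independent rows, such collapses can only help. The leftover case $r = 2$ needs a separate construction: the target $1 - \frac{3}{n}\lceil n/3 \rceil$ is nonpositive for every $n \ge 3$, so the cycle $C_n$, which is connected with locality $2$ and nonnegative rate, already satisfies the bound.
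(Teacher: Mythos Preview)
Your proof is correct and follows the same high-level plan as the paper---partition into $p=\lceil n/(r+1)\rceil$ blocks, chain them, and bound $\rank(\bar A(\cG))$ by $3p$---but the graph you build and the rank argument you use are genuinely different. The paper keeps each block a \emph{near-clique} (the full clique on $\cV_i$ minus one edge) and chains by joining the two ``deficient'' vertices across adjacent blocks; it then bounds the rank by observing that there are at most $3p$ \emph{distinct} rows in $\bar A(\cG)$, since every vertex not in a designated set $I$ of size $\le 3p$ has a row identical to that of some vertex in $I$. You instead carve out a connector $c_i$ from each block, make $\cV_i\setminus\{c_i\}$ a clique, attach $c_i$ to a single anchor $d_i$, and chain the connectors; your rank bound proceeds by exhibiting an explicit $3p$-dimensional subspace $\mathrm{span}\{a_i,\mathbf 1_{c_i},\mathbf 1_{d_i}\}$ containing all rows. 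Your spanning-set argument is arguably cleaner---it is indifferent to whether rows actually coincide and absorbs the size-$2$ block degeneracies automatically---while the paper's row-duplication argument ties the bound more directly to the combinatorics of the graph. Your separate handling of $r=2$ via $C_n$ and the observation that the target bound is nonpositive mirrors the paper's treatment (where $I=[n]$ makes the $3p$ bound trivial).
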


\begin{proof}
    For brevity, we define $p\triangleq \lceil\frac{n}{r+1}\rceil$, and identify $\cV$ with $[n]$. We begin by defining the required connected graph as follows. Firstly, partition the set of vertices $\cV$ into the sets $\cV_i, 1\leq i\leq p$, where $\cV_i=\{(i-1)(r+1)+1,(i-1)(r+1)+2,\ldots,i(r+1)\}$, for $1\leq i\leq p-1$, and $\cV_{p}=\{(p-1)(r+1)+1,\ldots,n\}$. Thus, by construction $|\cV_i|\;=r+1, 1\leq i\leq p-1$, and $|\cV_{p}|\;=n-(r+1)(p-1)\leq r+1$.

    Next, we define the set of edges $\cE$ as follows. Fix any $i$, where $1\leq i\leq p-1$, and connect every pair of vertices in $\cV_i$ with an edge, except the pair $((i-1)(r+1)+1,i(r+1))$. Next, if $|\cV_{p}|\;\geq 3$, then connect every pair of vertices in $\cV_{p}$ by an edge, except the pair $((p-1)(r+1)+1,n)$. On the other hand, if $\cV_{p}=\{(p-1)(r+1)+1, n\}$, then connect the vertices $(p-1)(r+1)$ and $n$ by an edge. Finally, connect the pairs $(i(r+1),i(r+1)+1), 1\leq i\leq p-1$. Examples of the graph $\cG$ with $k=r$ and $N=9,10,11,$ and $12$ appear in Figure~\ref{fig:conngraphs}.

    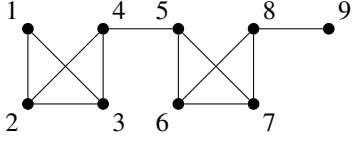
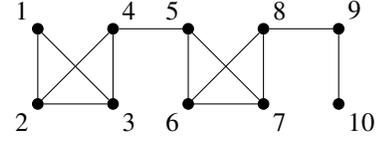
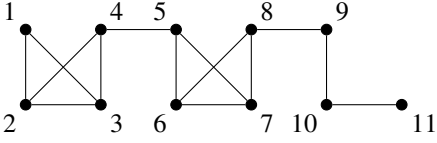
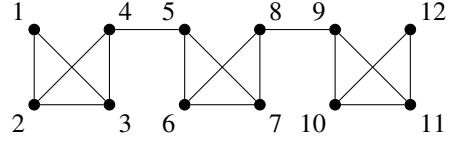
\begin{figure*}
        \centering
        \begin{subfigure}{0.45\textwidth}
            \centering
            \begin{tikzpicture}
            \tikzset{vertex/.style={draw, circle, fill=black, inner sep=0pt, minimum size=4pt}}

                \node[vertex] (A1) at (0,2) {};
                \node[vertex] (A2) at (1,3) {};
                \node[vertex] (A3) at (1,2) {};
                \node[vertex] (A4) at (0,3) {};
                \node[vertex] (A5) at (2,3) {};
                \node[vertex] (A6) at (3,3) {};
                \node[vertex] (A7) at (3,2) {};
                \node[vertex] (A8) at (2,2) {};
                \node[vertex] (A9) at (4,3) {};

                \draw(A1)--(A2);
                \draw(A1)--(A3);
                \draw(A1)--(A4);
                \draw(A3)--(A2);
                \draw(A4)--(A3);
                \draw(A2)--(A5);
                \draw(A7)--(A8);
                \draw(A5)--(A7);
                \draw(A5)--(A8);
                \draw(A7)--(A6);
                \draw(A8)--(A6);
                \draw(A6)--(A9);

                \node[below left] at (A1) {2};
                \node[above right] at (A2) {4};
                \node[below right] at (A3) {3};
                \node[above left] at (A4) {1};
                \node[below left] at (A8) {6};
                \node[above right] at (A6) {8};
                \node[below right] at (A7) {7};
                \node[above left] at (A5) {5};
                \node[above right] at (A9) {9};
            \end{tikzpicture}
            \caption{$n=9, r=3$, $\cV_1=\{1,2,3,4\}$, $\cV_2=\{5,6,7,8\}$, $\cV_3=\{9\}$}
            \label{fig:n9k3}
        \end{subfigure}
        \hfill
        \begin{subfigure}[b]{0.45\textwidth}  
            \centering
            \begin{tikzpicture}
            \tikzset{vertex/.style={draw, circle, fill=black, inner sep=0pt, minimum size=4pt}}

                \node[vertex] (A1) at (0,2) {};
                \node[vertex] (A2) at (1,3) {};
                \node[vertex] (A3) at (1,2) {};
                \node[vertex] (A4) at (0,3) {};
                \node[vertex] (A5) at (2,3) {};
                \node[vertex] (A6) at (3,3) {};
                \node[vertex] (A7) at (3,2) {};
                \node[vertex] (A8) at (2,2) {};
                \node[vertex] (A9) at (4,3) {};
                \node[vertex] (A10) at (4,2) {};

                \draw(A1)--(A2);
                \draw(A1)--(A3);
                \draw(A1)--(A4);
                \draw(A3)--(A2);
                \draw(A4)--(A3);
                \draw(A2)--(A5);
                \draw(A7)--(A8);
                \draw(A5)--(A7);
                \draw(A5)--(A8);
                \draw(A7)--(A6);
                \draw(A8)--(A6);
                \draw(A6)--(A9);
                \draw(A9)--(A10);

                \node[below left] at (A1) {2};
                \node[above right] at (A2) {4};
                \node[below right] at (A3) {3};
                \node[above left] at (A4) {1};
                \node[below left] at (A8) {6};
                \node[above right] at (A6) {8};
                \node[below right] at (A7) {7};
                \node[above left] at (A5) {5};
                \node[above right] at (A9) {9};
                \node[below right] at (A10) {10};
            \end{tikzpicture}
            \caption{$n=10, r=3$, $\cV_1=\{1,2,3,4\}$, $\cV_2=\{5,6,7,8\}$, $\cV_3=\{9,10\}$}    
            \label{fig:n10k3}
        \end{subfigure}
        \vskip\baselineskip
        \begin{subfigure}[b]{0.45\textwidth}   
            \centering 
            \begin{tikzpicture}
            \tikzset{vertex/.style={draw, circle, fill=black, inner sep=0pt, minimum size=4pt}}

                \node[vertex] (A1) at (0,2) {};
                \node[vertex] (A2) at (1,3) {};
                \node[vertex] (A3) at (1,2) {};
                \node[vertex] (A4) at (0,3) {};
                \node[vertex] (A5) at (2,3) {};
                \node[vertex] (A6) at (3,3) {};
                \node[vertex] (A7) at (3,2) {};
                \node[vertex] (A8) at (2,2) {};
                \node[vertex] (A9) at (4,3) {};
                \node[vertex] (A10) at (4,2) {};
                \node[vertex] (A11) at (5,2) {};
                
                \draw(A1)--(A2);
                \draw(A1)--(A3);
                \draw(A1)--(A4);
                \draw(A3)--(A2);
                \draw(A4)--(A3);
                \draw(A2)--(A5);
                \draw(A7)--(A8);
                \draw(A5)--(A7);
                \draw(A5)--(A8);
                \draw(A7)--(A6);
                \draw(A8)--(A6);
                \draw(A6)--(A9);
                \draw(A9)--(A10);
                \draw(A10)--(A11);

                \node[below left] at (A1) {2};
                \node[above right] at (A2) {4};
                \node[below right] at (A3) {3};
                \node[above left] at (A4) {1};
                \node[below left] at (A8) {6};
                \node[above right] at (A6) {8};
                \node[below right] at (A7) {7};
                \node[above left] at (A5) {5};
                \node[above right] at (A9) {9};
                \node[below left] at (A10) {10};
                \node[below right] at (A11) {11};
            \end{tikzpicture}
            \caption{$n=11, r=3$, $\cV_1=\{1,2,3,4\}, \cV_2=\{5,6,7,8\}, \cV_3=\{9,10,11\}$}    
            \label{fig:n11k3}
        \end{subfigure}
        \hfill
        \begin{subfigure}[b]{0.45\textwidth}   
            \centering 
            \begin{tikzpicture}
            \tikzset{vertex/.style={draw, circle, fill=black, inner sep=0pt, minimum size=4pt}}

                \node[vertex] (A1) at (0,2) {};
                \node[vertex] (A2) at (1,3) {};
                \node[vertex] (A3) at (1,2) {};
                \node[vertex] (A4) at (0,3) {};
                \node[vertex] (A5) at (2,3) {};
                \node[vertex] (A6) at (3,3) {};
                \node[vertex] (A7) at (3,2) {};
                \node[vertex] (A8) at (2,2) {};
                \node[vertex] (A9) at (4,3) {};
                \node[vertex] (A10) at (4,2) {};
                \node[vertex] (A11) at (5,2) {};
                \node[vertex] (A12) at (5,3) {};
                
                \draw(A1)--(A2);
                \draw(A1)--(A3);
                \draw(A1)--(A4);
                \draw(A3)--(A2);
                \draw(A4)--(A3);
                \draw(A2)--(A5);
                \draw(A7)--(A8);
                \draw(A5)--(A7);
                \draw(A5)--(A8);
                \draw(A7)--(A6);
                \draw(A8)--(A6);
                \draw(A6)--(A9);
                \draw(A9)--(A10);
                \draw(A10)--(A11);
                \draw(A11)--(A12);
                \draw(A10)--(A12);
                \draw(A9)--(A11);

                \node[below left] at (A1) {2};
                \node[above right] at (A2) {4};
                \node[below right] at (A3) {3};
                \node[above left] at (A4) {1};
                \node[below left] at (A8) {6};
                \node[above right] at (A6) {8};
                \node[below right] at (A7) {7};
                \node[above left] at (A5) {5};
                \node[above left] at (A9) {9};
                \node[below left] at (A10) {10};
                \node[below right] at (A11) {11};
                \node[above right] at (A12) {12};
            \end{tikzpicture}
            \caption{$n=12, r=3$, $\cV_1=\{1,2,3,4\}, \cV_2=\{5,6,7,8\}, \cV_3=\{9,10,11,12\}$}    
            \label{fig:n12k3}
        \end{subfigure}
        \caption{Examples of graphs $\cG$ appearing in the proof of Lemma~\ref{lem:connected} for $r=3$ and $n=9,10,11,$ and $12$.} 
        \label{fig:conngraphs}
    \end{figure*}

    We shall first argue that $\cG$ has locality $r$, i.e., the degree of each vertex of $\cG$ is at most $r$. Firstly, consider the vertices of $\cV_i, 1\leq i\leq p-1$. Vertices $(i-1)(r+1)+j, 2\leq j\leq r$ have degree $r$ since it has all the remaining $r$ vertices of $\cV_i$. Also, the vertices $i(r+1), 1\leq i\leq p-1$ have degree $r$ since they have as their neighbours $i(r+1)+1$ as well as every vertex in $\cV_i$ except $(i-1)(r+1)+1$. Finally, vertex 1 has degree $r-1$, being connected to vertices $2,3,\ldots,r$. Now, for the vertices in $\cV_p$, if $|\cV_p|\geq 3$, the maximum degree of any vertex is at most $|\cV_p|-1\leq r$. Else, the maximum degree of vertices in $\cV_p$ is $2\leq r$. Therefore, all vertices have degree at most $r$ as required. To complete the proof, we use Lemma~\ref{lem:basic}, and show that $\bar{A}(\cG)$ has rank at most $3p$.

    Let $\bar{A}(\cG)_i$ denote the $i$th row of the augmented adjacency matrix of $\cG$. Let $I\subsetneq [n]$ denote the set of vertices defined as follows. For every $1\leq i\leq p-1$, the vertices $(i-1)(r+1)+1, (i-1)(r+1)+2, i(r+1)$ are contained in $I$. Next, if $(r+1)(p-1)+1=n$, assign $(r+1)(p-1)+1$ to $I$. Otherwise, if $(r+1)(p-1)+2=n$, assign both $(r+1)(p-1)+1$ and $(r+1)(p-1)+2$ to $I$. Else, if $(r+1)(p-1)+2<n$, assign $(r+1)(p-1)+1, (r+1)(p-1)+2$, and $n$ to $I$. Therefore, by construction, $|I|\;\leq 3p$. 

    We shall argue that either $I=[n]$, or for every $i\notin I$, there exists an $i'\in I$ such that $\bar{A}(\cG)_i=\bar{A}(\cG)_{i'}$. This would ensure that $\rank(\bar{A}(\cG))\leq |I|\;\leq 3p$, which would complete the proof. To see this, first observe that if $r=2$, then $I=[n]$. Otherwise, if $r\geq 3$, note that $[n]\setminus I=\cup_{1\leq j\leq p}I_j$, where $I_j\triangleq \{(j-1)(r+1)+l:3\leq l\leq r\}$, for $1\leq j\leq p-1$, and $I_p$ is given as follows. If $n\leq (r+1)(p-1)+3$, then $I_p=\emptyset$. Else, when $n\geq (r+1)(p-1)+4$, then $I_p=\{(p-1)(r+1)+l:3\leq l\leq n-(p-1)(r+1)-1\}$. Observe that $I_j\subsetneq \cV_j, 1\leq j\leq p$.
    
    Now, fix any $j$ with $1\leq j\leq p-1$. Then, by construction, any vertex $v\in I_j$ has exactly the same set of neighbours\footnote{Since we are considering the augmented adjacency matrix, by default we consider that every vertex is a neighbour of itself.} as the vertex $(j-1)(r+1)+2\in I$. Thus, $\bar{A}(\cG)_v=\bar{A}(\cG)_{(j-1)(r+1)+2}$. Finally, if $I_p\neq \emptyset$, then any $v\in I_p$ has exactly the same neighbours as $(p-1)(r+1)+2\in I$, and hence $\bar{A}(\cG)_v=\bar{A}(\cG)_{(p-1)(r+1)+2}$. Hence, for every $i\notin I$, we have shown that there exists $i'\in I$ satisfying $\bar{A}(\cG)_i=\bar{A}(\cG)_{i'}$, and so the lemma follows.
\end{proof}

\section{Proof of the converse part of Theorem~\ref{th:cap}}\label{sec:conv}

We conclude the proof of Theorem~\ref{th:cap} in this section by showing that for any $n\geq 2$ and $r\leq n-1$, any symmetric binary matrix $A\in\{0,1\}^{n\times n}$ with 1's in the diagonal and whose rows have a weight of at most $r+1$, has a rank of at least $\lfloor\frac{n}{r+1}\rfloor$. Along with Lemma~\ref{lem:basic}, this would then imply that $C_{n,r}\leq 1-\frac{1}{n}\lfloor\frac{n}{r+1}\rfloor$.

\begin{lemma}
    \label{lem:rank} Let $\cA\in\{0,1\}^{n\times n}$ be the augmented adjacency matrix of a graph on $n$ vertices such that no vertex has more than $r$ neighbors. Then $$\rank(\cA) \geq \biggl\lfloor\frac{n}{r+1}\biggr\rfloor.$$
\end{lemma}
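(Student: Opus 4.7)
The plan is to exhibit a large principal submatrix of $\cA$ that equals the identity matrix over $\field_2$; since the rank of any submatrix lower-bounds the rank of the full matrix, this immediately gives the desired inequality. More precisely, view $\cA$ as the augmented adjacency matrix of a graph $\cG$ of maximum degree at most $r$, and let $I\subseteq[n]$ be an independent set in $\cG$. The principal submatrix $\cA[I,I]$ has diagonal entries equal to $1$ (by the augmentation) and off-diagonal entries equal to $0$ (since any two distinct vertices of $I$ are non-adjacent); hence $\cA[I,I]$ is the $|I|\times|I|$ identity matrix and $\rank(\cA)\geq\rank(\cA[I,I])=|I|$. So it suffices to produce an independent set in $\cG$ of size at least $\lfloor n/(r+1)\rfloor$.

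To find such an independent set, I would run the standard greedy procedure: repeatedly pick any remaining vertex $v$, add it to $I$, and delete $v$ together with all its $\leq r$ neighbours. Each iteration removes at most $r+1$ vertices, so the process runs for at least $\lceil n/(r+1)\rceil\geq\lfloor n/(r+1)\rfloor$ iterations before all vertices are exhausted. The selected vertices are pairwise non-adjacent, because at the moment each new vertex is chosen it lies outside the closed neighbourhoods of all previously chosen vertices. Combining this with the submatrix observation yields $\rank(\cA)\geq\lfloor n/(r+1)\rfloor$, and hence, via Lemma~\ref{lem:basic}, the upper bound $C_{n,r}\leq 1-\frac{1}{n}\lfloor n/(r+1)\rfloor$.

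There is no substantial obstacle in this argument; the only thing to notice is that the diagonal $1$'s play a crucial role. Without the augmentation, the principal submatrix indexed by an independent set would be the all-zero matrix and the argument would give no nontrivial rank bound. An equivalent inductive phrasing, which may be convenient to write up, is to pick any vertex $v$, restrict $\cA$ to the principal submatrix indexed by $\{v\}\cup\bigl([n]\setminus(\{v\}\cup N(v))\bigr)$, observe that it is block-diagonal with a $1\times 1$ block $[1]$ and the augmented adjacency matrix of the induced subgraph on $[n]\setminus(\{v\}\cup N(v))$ (which still has maximum degree at most $r$), and apply the induction hypothesis to the second block. Either phrasing cleanly converts the degree bound into the advertised rank bound.
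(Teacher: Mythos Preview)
Your proof is correct. Both your argument and the paper's proceed via a greedy selection that exploits the degree bound, but they certify linear independence differently. The paper iterates over row indices $i$ and, whenever no previously selected column has a $1$ in row $i$, adds some column with a $1$ there; the selected columns then exhibit an echelon-like structure (each new column has a $1$ in a row where all earlier ones vanish), and the covering count shows at least $\lfloor n/(r+1)\rfloor$ columns are selected. You instead greedily build an independent set $I$ of vertices and observe that the principal submatrix $\cA[I,I]$ is the identity matrix, which is arguably cleaner and even yields the slightly sharper bound $\rank(\cA)\geq\lceil n/(r+1)\rceil$. The two arguments are close cousins: if in the paper's procedure one always chooses the diagonal column $c_i$ at step $i$, the indices of the selected columns form exactly a greedy independent set, and restricting those columns to the same set of rows reproduces your identity submatrix.
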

\begin{proof}
    Let the set of columns of $\cA$ be $C =\{c_1,\ldots,c_n\}$. Each column (respectively row) is a $n\times 1$ vector in $\field_2^n.$ Denote the $i$-th entry of the column $c_j$ by $c_j(i)$ .
    
    We begin by building a set $\cN$ of `nice' columns of $\cA$ as follows. We iterate over $i\in[n]$. For each $i$, if there is no column $c_j\in \cN$ such that the $c_j(i)=1$, then add a new column $c$ from $C\setminus \cN$ such that  $c(i) = 1$. If there is no such column, we skip to the next $i$, and we terminate once we reach $n$. It follows therefore that the set of columns in $\cN$ are linearly independent. It is therefore enough to show that $|\cN|\geq \lfloor\frac{n}{r+1}\rfloor$.
    
    To bound $|\cN|$, we begin by giving an alternate interpretation of the procedure for constructing $\cN$ as follows. We identify the set of vertices with $[n]$, and pick columns corresponding to certain vertices and add them to $\cN$: in ascending order, for every vertex $i\in[n]$, we add the column $c_j$ for some $j\in N(i)\cup\{i\}$ iff no other $c_k$, with $k\in N(i)\cup\{i\}$, has already been added to $\cN$. Since, each vertex has at most $r$ neighbors, selecting a column for any $i$ ensures that for at most $|N(i)\cup\{i\}|\;\leq r+1$ of the remaining vertices, no columns can be added to $\cN$. Hence, at least $\lfloor\frac{n}{r+1}\rfloor$ `nice' columns are added to $\cN$. 

    Therefore, $|\cN|\geq \lfloor\frac{n}{r+1}\rfloor$, and since it is a linearly independent set by construction, the lower bound on the rank follows.
\end{proof}

\section{Conclusion}\label{sec:conc}

In this work, we considered binary storage codes with parity repair functions, and showed that for any given sequence of localities $r(n), n\geq 1$, we have a sequence of graphs $\cG_n$ with rates $1-o(1)$ if and only if $r(n)=\omega(1)$. This result was shown by characterising the capacity of binary storage codes on graphs with length $n$ and locality $r$. The capacity upper bound was obtained via a lower bound on the ranks of symmetric binary matrices with bounded row-weights, while the capacity lower bound was achieved via an explicit construction of a family of graphs. The obtained upper and lower bounds on the capacity differ by at most $1/n$, a factor which disappears asymptotically.

A relevant future direction of research is characterising the tradeoff between the number of edges of a graph without isolated vertices and the rate of the binary storage code supported on it. It already follows from our construction that as long as $|\cE|=\omega(n)$, a rate of $1-o(1)$ is possible. To see this, simply use the graphs from Lemma~\ref{lem:const} with $r=\Theta(|\cE|/n)=\omega(1)$. On the other hand, if $|\cE|=o(n)$, then we must have isolated vertices, and hence such graphs cannot support storage codes. The case when $|\cE|=\Theta(n)$ remains to be explored.

\bibliographystyle{acm}
\bibliography{ref}
\end{document}